\DeclareMathOperator*{\argmin}{arg\,min}
\newtheorem{theorem}{\bf Theorem}[section]
\newtheorem{proposition}[theorem]{Proposition}
\begin{document}

\title{Unit-free and robust detection of differential expression from RNA-Seq data}

\author[1,2,*]{Hui Jiang} \author[1]{Tianyu Zhan}
\affil[1]{Department of Biostatistics, University of Michigan}
\affil[2]{Center for Computational Medicine and Bioinformatics, University of Michigan}
\affil[*]{Please send correspondence to jianghui@umich.edu.}
\maketitle

\begin{abstract}
Ultra high-throughput sequencing of transcriptomes (RNA-Seq) is a widely used method for quantifying gene expression levels due to its low cost, high accuracy and wide dynamic range for detection. However, the nature of RNA-Seq makes it nearly impossible to provide absolute measurements of transcript abundances. Several units or data summarization methods for transcript quantification have been proposed in the past to account for differences in transcript lengths and sequencing depths across different genes and different samples. Nevertheless, further between-sample normalization is still needed for reliable detection of differentially expressed genes. In this paper we propose a unified statistical model for joint detection of differential gene expression and between-sample normalization. Our method is independent of the unit in which gene expression levels are summarized. We also introduce an efficient algorithm for model fitting. Due to the L0-penalized likelihood used in our model, it is able to reliably normalize the data and detect differential gene expression in some cases when more than $50\%$ of the genes are differentially expressed in an asymmetric manner. We compare our method with existing methods using simulated and real data sets.

\end{abstract}

\section{Introduction}
\label{sec:intro}

Ultra high-throughput sequencing of transcriptomes (RNA-Seq) is a widely used method for quantifying gene expression levels due to its low cost, high accuracy and wide dynamic range for detection~\citep{Mortazavi2008}. As of today, modern ultra high-throughput sequencing platforms can generate hundreds of millions of sequencing reads from each biological sample in a single day. RNA-Seq also facilitates the detection of novel transcripts~\citep{Trapnell2010} and the quantification of transcripts on isoform level~\citep{Jiang2009,Salzman2011}. For these reasons, RNA-Seq has become the method of choice for assaying transcriptomes~\citep{Wang2009}. 

In a typical RNA-Seq experiment, mRNA transcripts are extracted from biological samples, reverse transcribed into cDNA molecules, randomly fragmented into pieces, filtered based on fragment lengths (size selection), linked with sequencing adapters and finally processed by a sequencer. The data output by the sequencer are sequenced reads (or reads for short), from either one end (i.e., single-end sequencing) or both ends (i.e., paired-end sequencing) of the fragments. These reads are usually aligned to reference transcripts or genomes and the data are then summarized as read counts for each transcript in each sample. Complications may occur when reads cannot be uniquely aligned to reference transcripts or genomes, either due to sequence homology among genes, or due to multiple transcripts (isoforms) sharing commons regions (exons) within a single gene. In this paper, we will ignore these complications and assume that each gene has only one transcript and therefore we will use terms ``gene'' and ``transcript'' interchangeably. We will also assume that each read can be uniquely aligned to a single gene, and we take read counts for each gene in each sample as our data. However, our approach can work with estimated read counts or gene expression levels from methods developed to handle these complications such as~\citet{Jiang2009} and~\citet{Li2010a}. An overview of these methods is given in~\citet{Pachter2011}.

One major limitation of RNA-Seq is that it only provides relative measurements of transcript abundances. Because reads are sequenced from a random sample of transcript fragments, changing the total amount of transcripts in a sample will have little effect on the distribution of sequenced reads. Furthermore, longer transcripts will generate more fragments which will subsequently result in more reads, and sequencing with higher depth will also lead to more reads for each transcript. To account for these factors, several data summarization units (a.k.a. within-sample normalization methods) for transcript quantification have been proposed in the past to account for different sequencing depths across samples and possibly also for different transcript lengths across genes, which include CPM/RPM (counts/reads per million)~\citep{Robinson2010a}, RPKM/FPKM (reads/fragments per kilobase of exon per million mapped reads)~\citep{Mortazavi2008,Trapnell2010} and TPM (transcript per million)~\citep{Li2010a}. Since a ``read'' can refer to either a single-end read or a paired-end read, depending on the sequencing experiment conducted, we will use the term RPKM to represent both RPKM and FPKM in this paper. Suppose that there are a total of $m$ genes in the sample. For $i=1,\ldots,m$, let $l_i$ be the length (often measured as the effective length after adjusted for edge effects) of gene $i$ and let $c_i$ be the observed read count for gene $i$ in the sample. CPM (denoted as $cpm_i$), RPKM (denoted as $rpkm_i$) and TPM (denoted as $tpm_i$) values for gene $i$ are defined as follows, respectively
\begin{equation}
\label{units}
\begin{array}{lll}
cpm_i&=& 10^6c_i/\sum_ic_i\\
rpkm_i&=& 10^3cpm_i/l_i\\
tpm_i&=& 10^6rpkm_i/\sum_irpkm_i
\end{array}
\end{equation}
In the past, all these units have been used (with RPKM being the most widely used one) for quantifying gene expression levels from RNA-Seq data. It has been argued that TPM should be used instead of RPKM since TPM estimates the relative molar concentration of transcripts in a sample~\citep{Wagner2012}. However, none of these methods can be used directly to detect differentially expressed (DE) genes reliably without a further between-sample normalization step, which is necessary to make gene expression measurements comparable across samples. Different between-sample normalization methods make different assumptions on the distribution of gene expression levels across samples. For instance, quantile normalization~\citep{Bolstad2003} assumes that the overall distributions of gene expression levels are the same for all the samples. In fact, both RPKM and TPM can also be considered as between-sample normalization methods when they are used directly to detect DE genes without additional normalization -- TPM assumes that the total numbers of transcripts (i.e., total molar amount) are the same for all the samples and RPKM assumes that the total numbers of nucleotides in all the transcripts (i.e., total physical mass) are the same for all the samples, both of which are strong but arguably reasonable assumptions.

Between-sample normalization also has limitations. Consider a simple hypothetical example of comparing gene expression profiles of two samples A and B, where $60\%$ of the genes were up-regulated by 2-fold in sample B while the other $40\%$ of the genes stayed constant in both samples. Due to the relative nature of RNA-Seq measurements, it is impossible to distinguish it from the scenario where the first group of $60\%$ genes actually stayed constant but the second group of $40\%$ gene were down-regulated by 2-fold in sample B. Since the problem is non-identifiable in such cases, the normalization approach has to rule out the ambiguities based on its assumptions. One commonly used assumption is that the majority (i.e., $>50\%$) of the genes are non-DE. The median ratio method~\citep{Anders2010} and TMM (trimmed mean of M values)~\citep{Robinson2010a} are two normalization methods based on this assumption. Furthermore, between-sample normalization and detection of DE gene are two problems that are always tangled together, since ideally normalization should be based on non-DE genes only. The iterative normalization approach~\citep{Li2012} utilizes this idea and iterates between normalization and detection of DE genes. However, it is unclear what objective function is optimized in such an approach, and the final solution often depends on the initial guess which is undesirable. For an overview of between-sample normalization approaches and comparison of their performance for detection of DE genes, please refer to~\citet{Dillies2013,Rapaport2013}.

Fortunately, the example described above is rather extreme and unrealistic, because in practice genes rarely change at the same pace -- it is rather unlikely that all the DE genes are up-regulated by exactly the same amount (2-fold in the above example). Therefore, a more practical assumption is that among DE genes, the degrees at which genes change have an unknown but spreaded distribution. This assumption, while largely ignored by most existing approaches, will be exploited in this paper. 

In this paper, we will propose a unified statistical model for joint detection of differential gene expression and between-sample normalization. We will introduce the model and  an efficient algorithm for model fitting in Section~\ref{sec:model}. Comparisons with existing methods in simulated and real data sets will be given in Section~\ref{sec:experiments}, followed by discussions in Section~\ref{sec:discussion}.

\section{A penalized likelihood approach}
\label{sec:model}

\subsection{The model}\label{subsec:model}

Suppose there are a total of $m$ genes measured in $S$ groups of experiments with $n_1,\ldots,n_S$ samples, respectively. Let $x_{sij}, s=1,\ldots,S, i=1,\ldots,m, j=1,\ldots,n_s$ be log-transformed gene expression measurement for the $i$-th gene in the $j$-th sample in the $s$-th group. Here, $x_{sij}$ can be gene expression measurement summarized in units such as log(count), log(CPM), log(RPKM) or log(TPM), while a small positive number is usually added to the raw read count before any calculation to avoid taking logarithm of zero. The following statistical model is assumed
\begin{equation}
\label{model}
	x_{sij} \sim N(\mu_{si} + d_{sj}, \sigma^2_{i})
\end{equation}
where $\mu_{si}$ is the mean of log-transformed expression levels of gene $i$ in group $s$, $d_{sj}$ is a scaling factor (e.g., $\log(\mbox{sequencing depth})$ or $\log(\mbox{library size})$) for sample $j$ in group $s$ and $\sigma^2_{i}$ is the variance of log-transformed expression levels of gene $i$ across all $S$ groups. Here we assume that the log-transformation stabilizes the variances and makes them roughly the same across groups (yet can still be different across genes). Nevertheless, our model can be extended to accommodate heteroscedastic variances across groups.

Our main interest is in detecting differentially expressed (DE) genes across the $S$ groups. Let $\tau_i$ be the indicator of differential expression for gene $i$ such that $\tau_i=1$ if gene $i$ is differentially expressed across the $S$ groups and $\tau_i=0$ otherwise, i.e., $\tau_i=1$ if and only if $\mu_{1i}=\mu_{2i}=\cdots=\mu_{Si}$. The parameters of major interest are $\{\tau_i\}_{i=1}^m$, while $\mu_{si}, d_{sj}$ and $\sigma^2_i$ might be of interest too, because they denote biologically meaningful quantities.

Since there are many data summarization units (i.e., count, CPM, RPKM or TPM) for RNA-Seq data, it is desirable for a method for detecting DE genes to be independent of the unit in which gene expression levels are summarized. That is, the inference result (i.e., $\tau_i$) remains the same when the data (i.e., $x_{sij}$) change from one unit (e.g., log(count)) to another (e.g., log(RPKM)). Due to the structure of our model, as well as the log-transformation of the data, it can be shown that our model enjoys this property (see Appendix for proof). To the best of our knowledge, no other existing model for DE detection has this property.

\subsection{Penalized likelihood}

For now, we assume that $\{\sigma^2_i\}_{i=1}^m$ are known. In practice, we solve for $\sigma^2_i$ using an iterative approach, which will be described later.

To fit model~(\ref{model}), we reparametrize $\mu_{si}$ as $\mu_i=\mu_{1i}, \gamma_{si}=\mu_{si}-\mu_{1i}, s=2,\ldots,S$. Then the model becomes
\begin{equation}
\label{model2}
\left\{
\begin{array}{l}
	x_{1ij} \sim N(\mu_i + d_{1j}, \sigma_i^2) \\
	x_{sij} \sim N(\mu_i + \gamma_{si} + d_{sj}, \sigma_i^2), s=2,\ldots,S \\
\end{array}
\right.
\end{equation}	
To fit model~(\ref{model2}), we minimize its negative log-likelihood
$$l(\mu, \gamma, d; x)=\sum_{i=1}^m\frac1{2\sigma_i^2}\left(\sum_{j=1}^{n_1}(x_{1ij}-\mu_i-d_{1j})^2+\sum_{s=2}^S\sum_{j=1}^{n_s}(x_{sij}-\mu_i-\gamma_{si}-d_{sj})^2\right)$$
where the term $(n/2)\sum_{i=1}^m\log(2\pi\sigma^2_i)$ is discarded as it does not contain any unknown parameter when $\sigma^2_i$ is known. Model~($\ref{model2}$) is non-identifiable because we can simply add any constant to all the $d_{sj}$'s and subtract the same constant from all the $\mu_i$'s, while having the same fit for $l(\cdot)$. To resolve this issue, we fix $d_{11}=0$. Furthermore, we introduce a sparse penalty $p(\gamma)$ on all the $\gamma_{si}$'s and formulate a penalized likelihood
\begin{equation}
\label{penalized_likelihood}
f(\mu,\gamma,d)=l(\mu, \gamma, d; x)+p(\gamma)
\end{equation}
Commonly used sparse penalty functions for $p(\gamma)$ are L1 (a.k.a. lasso)~\citep{Tibshirani1996}, L0, SCAD~\citep{Fan2001} and etc. The sparse penalty will force some of the $\gamma_{si}$'s to become exactly zero, which will in turn facilitate the detection of DE genes since by definition $\tau_i=1(\sum_{s=2}^S|\gamma_{si}|>0)$.

Let $n=\sum_{s=1}^Sn_s$ be the total sample size. There are $mn$ observations and $mS+n-1$ free parameters in the model, and typically we can have $n$ in tens or hundreds and $m$ in tens of thousands. Using an L1 penalty, it will be computationally intensive if we fit the model using a lasso solver such as Glmnet~\citep{Friedman2010}, and typically it will be computationally even more challenging to fit the model with a non-convex penalty such as L0 or SCAD. Fortunately, we can take advantage of the structure in model~(\ref{model2}) and solve it efficiently. In this paper we work with the L0 penalty due to its robustness in estimation and variable selection
\begin{equation}
\label{L0}
p(\gamma)=\sum_{i=1}^m\alpha_i1(\sum_{s=2}^S|\gamma_{si}|>0)
\end{equation}
where $\alpha_i>0, i=1,\ldots,m$ are tuning parameters. The approach to choose $\{\alpha_i\}_{i=1}^m$ will be described later. Our model fitting approach can also be adapted to accommodate other penalty functions. 

\subsection{Model fitting}

It can be shown that the solution to~(\ref{penalized_likelihood}) with penalty~(\ref{L0}) can be obtained as follows (see Appendix for proof). 
\begin{proposition}
\label{solution}
Model~(\ref{penalized_likelihood}) with penalty~(\ref{L0}) can be solved as follows
$$
\begin{array}{l}
	d^\prime_{sj} = (\sum_{i=1}^m(x_{sij}-x_{si1})/\sigma_i^2)/(\sum_{i=1}^{m}1/\sigma_i^2), s=1,\ldots,S\\
	\mu^\prime_{si} = (1/n_s)\sum_{j=1}^{n_s}(x_{sij}-d^\prime_{sj}) , s=1,\ldots,S\\
	d_1=0\\
        d_2,\ldots,d_S=\displaystyle\argmin_{d_2,\ldots,d_S}\sum_{i=1}^m\min\left(g(d_2,\ldots,d_S), \alpha_i\right)\\
        \mbox{where } \displaystyle g(d_2,\ldots,d_S)=\frac1{2\sigma_i^2}\left\{\sum_{s=1}^Sn_s(\mu_{si}^\prime-d_s)^2-\frac1n\left[\sum_{s=1}^S(n_s(\mu_{si}^\prime-d_s))\right]^2\right\}\\
	d_{sj}=d_s+d^\prime_{sj} , s=1,\ldots,S\\
	$$
	\gamma_{si} = \left\{
	\begin{array}{ll}
		0  &\mbox{ if } g(d_2,\ldots,d_S)<\alpha_i\\
		\mu^\prime_{si}-\mu^\prime_{1i}-d_s &\mbox{ otherswise}		
	\end{array}\right.
	$$ \\
	\mu_i = \left\{
	\begin{array}{ll}
		$$(1/n)\sum_{s=1}^Sn_s(\mu^\prime_{si}-d_s)$$ &\mbox{ if }  g(d_2,\ldots,d_S)<\alpha_i\\
		\mu^\prime_{1i} &\mbox{ otherwise}\\
	\end{array}\right.\\
\end{array}
$$
\end{proposition}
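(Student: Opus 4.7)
The plan is to reduce the joint minimization of~(\ref{penalized_likelihood}) under penalty~(\ref{L0}) to an explicit sequential profiling procedure driven by an ANOVA-style decomposition. Write $d_{sj}=d_s+d'_{sj}$ with $d'_{s1}=0$, so that $d_{s1}=d_s$ and the identifiability constraint $d_{11}=0$ becomes simply $d_1=0$; this is a bijection onto the $d_{sj}$'s. For each pair $(s,i)$, the algebraic identity
\[
\sum_{j=1}^{n_s}(x_{sij}-\mu_{si}-d_{sj})^2=\sum_{j=1}^{n_s}\bigl[(x_{sij}-\bar x_{si})-(d_{sj}-\bar d_s)\bigr]^2+n_s\bigl(\mu_{si}-(\bar x_{si}-\bar d_s)\bigr)^2,
\]
where $\mu_{si}=\mu_i+\gamma_{si}\cdot 1(s\geq 2)$ and $\bar x_{si},\bar d_s$ denote within-group means, splits the quadratic into a within-group part, which depends on the $d_{sj}$'s only through the mean-zero deviations $d_{sj}-\bar d_s$, and a between-group part, which depends on the $d_{sj}$'s only through $\bar d_s$. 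Under the bijection $(d_{sj})\leftrightarrow(\bar d_s,\,d_{sj}-\bar d_s)$, these two pieces live on disjoint coordinate blocks and can be optimized independently.

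Minimizing the within-group part over the deviations $d_{sj}-\bar d_s$ (subject to $\sum_j(d_{sj}-\bar d_s)=0$ in each group) is a weighted least-squares problem; its Lagrangian first-order conditions yield the precision-weighted estimates, which after translation to the $(d_s,d'_{sj})$ coordinates give exactly $d'_{sj}=[\sum_i(x_{sij}-x_{si1})/\sigma_i^2]/[\sum_i 1/\sigma_i^2]$ and define $\mu'_{si}=(1/n_s)\sum_j(x_{sij}-d'_{sj})$. With the $d'_{sj}$ pinned down, the remaining per-gene cost is
\[
\frac{1}{2\sigma_i^2}\sum_{s=1}^S n_s(\mu_{si}-\mu^*_{si})^2+\alpha_i\,1\bigl(\textstyle\sum_{s\geq 2}|\gamma_{si}|>0\bigr),\qquad\mu^*_{si}=\mu'_{si}-d_s.
\]
Profiling over $(\mu_i,\gamma_{si})$ under each value of the DE indicator produces two closed-form subproblems: in the DE branch, $\mu_{si}=\mu^*_{si}$ yields $\mu_i=\mu'_{1i}$, $\gamma_{si}=\mu'_{si}-\mu'_{1i}-d_s$, and residual cost $\alpha_i$; in the non-DE branch, $\mu_i=(1/n)\sum_s n_s\mu^*_{si}$ and an elementary expansion of the resulting weighted sum of squares gives $g(d_2,\ldots,d_S)$. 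The pointwise minimum is $\min(\alpha_i,g)$, with threshold rule $g<\alpha_i\Rightarrow$ gene $i$ non-DE, recovering the stated case splits for $\mu_i$ and $\gamma_{si}$.

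Summing the per-gene minima leaves the outer objective $\sum_i\min(\alpha_i,g(d_2,\ldots,d_S))$ to be minimized over $d_2,\ldots,d_S$, which the proposition records as an implicit $\argmin$. This outer step is the main obstacle: the truncation by $\alpha_i$ together with the quadratic dependence of $g$ on $d$ renders the problem nonconvex and generically without a closed-form stationary point. However, the proposition asserts only the structural form of the solution, so no further analysis of this outer optimization is required; the remaining verification is bookkeeping, namely that each profiling step above minimizes over a coordinate block that appears nowhere else in the objective, so nesting the three levels---within-group $d'_{sj}$, gene-level $(\mu_i,\gamma_{si})$, and group-level $d_s$---does not lose any feasible point, and the substitution $d_{sj}=d_s+d'_{sj}$ together with the case splits reproduces all of the stated formulas.
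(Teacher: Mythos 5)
Your proposal is correct and follows essentially the same route as the paper's proof: profile out the within-group offsets $d^\prime_{sj}$, then for fixed $d_2,\ldots,d_S$ minimize each gene's contribution under the two branches ($\gamma_{2i}=\cdots=\gamma_{Si}=0$ versus not), which yields the threshold rule $g<\alpha_i$, the case formulas for $\mu_i$ and $\gamma_{si}$, and the outer objective $\sum_{i}\min\left(g,\alpha_i\right)$ over $d_2,\ldots,d_S$. The only real difference is how the $d^\prime_{sj}$ step is justified: you use the within/between-group sum-of-squares decomposition and block separation, while the paper takes first-order conditions in $d_{sj}$ and notes that $\mu_i$ cancels in the differences $d_{sj}-d_{s1}$; both give the same precision-weighted formula.
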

In Proposition~\ref{solution}, the only computationally intensive step is to solve for $d_2,\ldots,d_S$, for which the following function is minimized
$$G(d_2,\ldots,d_S)=\sum_{i=1}^m\min\left(g(d_2,\ldots,d_S), \alpha_i\right).
$$
Typically, function $G(\cdot)$ is non-convex and non-differentiable. Such kind of functions are usually difficult to optimize. However, in low-dimensional cases, function $G(\cdot)$ can be minimized efficiently using exhaustive search. Examples with $S=2$ and $S=3$ are given in Figure~\ref{two-min-d} (where $G^\prime(\cdot)$, a variation of $G(\cdot)$ is shown; see Section~\ref{subsec:twogroup} for details) and Figure~\ref{three-min-d}, respectively.

\begin{figure}[htb]
\begin{center}
\includegraphics[scale=0.5]{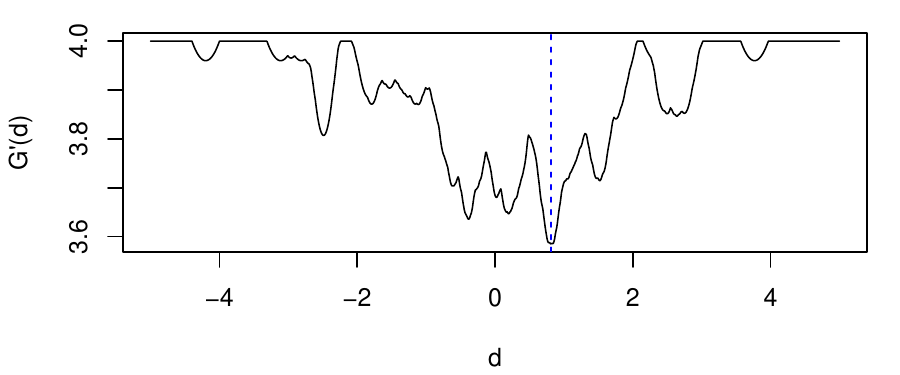}\\
\footnotesize{(a)}\\
\includegraphics[scale=0.5]{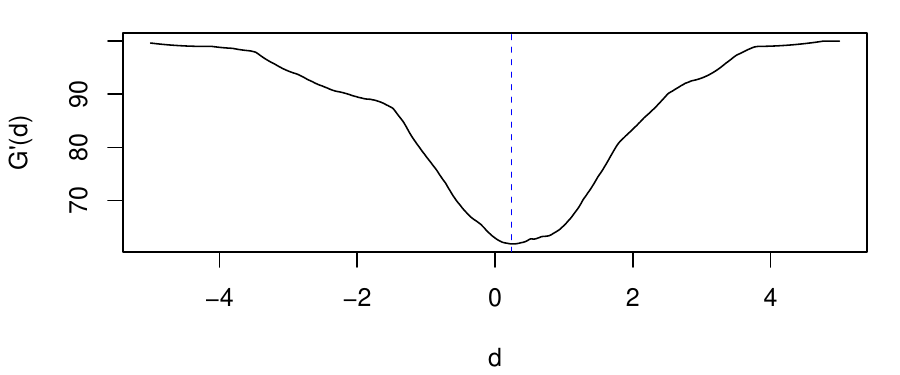}\\
\footnotesize{(b)}\\
\end{center}
\caption{Function $G^\prime(d)=\sum_{i=1}^{100}(1/\sigma^2_i)\min\left((\mu^\prime_{2i}-\mu^\prime_{1i}-d)^2, \lambda_i^2\right)$ with simulated $\{\mu^\prime_{1i}\}_{i=1}^{100}, \{\mu^\prime_{2i}\}_{i=1}^{100}\sim N(0,1), \{\sigma^2_i\}_{i=1}^{100}=1$ with (a) $\{\lambda_i\}_{i=1}^{100}=0.2$ and (b) $\{\lambda_i\}_{i=1}^{100}=1$. The minimizers of $G^\prime(\cdot)$ are shown with dashed lines. \label{two-min-d}}
\end{figure}

\begin{figure}[htb]
\begin{center}
\begin{tabular}{cc}
\includegraphics[scale=0.4]{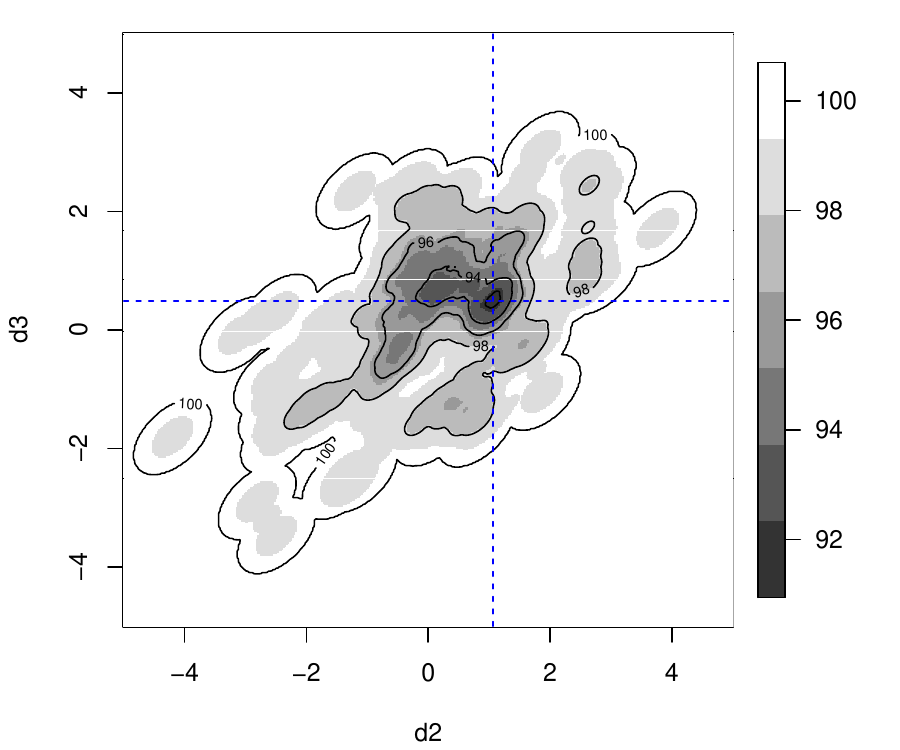} & \includegraphics[scale=0.4]{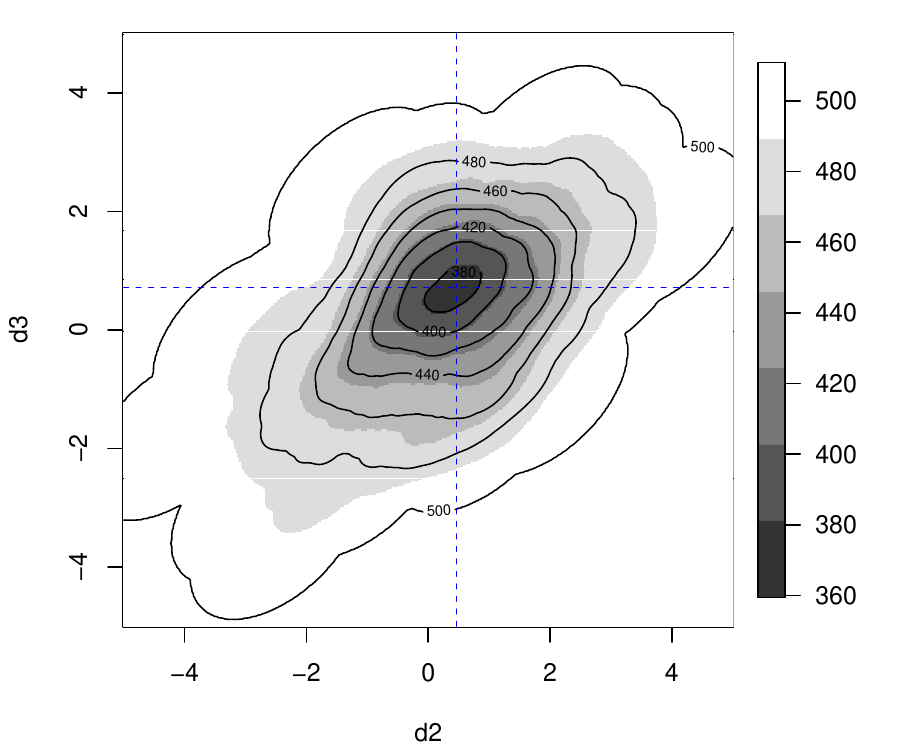} \\ 
\footnotesize{(a)} & \footnotesize{(b)} \\ 
\end{tabular} 
\end{center}
\caption{Function $G(d2,d3)$ with simulated $\{\mu^\prime_{1i}\}_{i=1}^{100}, \{\mu^\prime_{2i}\}_{i=1}^{100},\{\mu^\prime_{3i}\}_{i=1}^{100}\sim N(0,1), \{\sigma^2_i\}_{i=1}^{100}=1, n_1=n_2=n_3=10$ with (a) $\{\alpha_i\}_{i=1}^{100}=1$ and (b) $\{\alpha_i\}_{i=1}^{100}=5$. The minimizers of $G(\cdot)$ are at the intersections of the dashed lines. \label{three-min-d}}
\end{figure}

\subsection{Choosing $\alpha_i$}
\label{subsec:alpha}

Choosing the tuning parameters in a penalized regression model is usually quite challenging because it involves the bias-variance tradeoff~\citep{Hastie2009}. In practice, cross-validation can often achieve acceptable performance, at the cost of additional computation and less robustness. Fortunately, there is a simple way to choose the tuning parameters in our model, which is based on the property of the solution in Proposition~\ref{solution}. Let $y_{sij}=x_{sij}-d_{sj}$ denote the normalized data. Then 
$$\mu_{si}^\prime-d_s=\frac1{n_s} \sum_{j=1}^{n_s}(x_{sij}-d^\prime_{sj}-d_s)=\frac1{n_s}\sum_{j=1}^{n_s}y_{sij}$$
is the mean of $y_{sij}$ for gene $i$ in group $s$. The condition for $\gamma_{si}=0$ in Proposition~\ref{solution} can be rewritten as
$$
\frac1{(S-1)\sigma_i^2}\left\{\sum_{s=1}^S\frac{\left(\sum_{j=1}^{n_s}y_{sij}\right)^2}{n_s}-\frac1n\left(\sum_{s=1}^S\sum_{j=1}^{n_s}y_{sij}\right)^2\right\}<\frac{2\alpha_i}{S-1}\\
$$
where the left hand side (LHS) has the form of the $F$-statistic for one-way ANOVA models, which suggests we choose $\alpha_i$ as $((S-1)/2)F^*_{1-q}(S-1, n-S)$, where $F^*_{1-q}(\cdot)$ is the critical value for one-sided level $q$ tests with the $F$-distribution. Typical $p$-value cutoffs can be used for $q$ here, such as $0.05$, $0.01$ or even lower values for more stringent DE gene detection. We set $q=0.01$ in our experiments.

\subsection{Simplification for two-group comparison}
\label{subsec:twogroup}

Here we study the simplest case of two-group comparison, which is the most widely used experimental design for differential gene expression study. For $S=2$, the condition for $\gamma_{2i}=0$ is 
$$\frac1{2\sigma_i^2}\frac{n_1n_2}n(\mu_{2i}^\prime-\mu_{1i}^\prime-d_2)^2<\alpha_i$$
which can be rewritten as
$$|\mu^\prime_{2i}-\mu^\prime_{1i}-d| < \lambda_i$$
where 
$$\lambda_i=\sqrt{\frac{2n\sigma^2_i\alpha_i}{n_1n_2}}$$
is a tuning parameter that is alternative to and more convenient than $\alpha_i$. Therefore, the solutions for $d_2$ (denoted as $d$ for simplicity), $\gamma_{2i}$ (denoted as $\gamma_i$ for simplicity) and $\mu_i$ in Proposition~\ref{solution} can be simplified as
$$
d=\argmin_{d}G^\prime(d)=\argmin_{d}\sum_{i=1}^m\frac1{\sigma_i^2}\min\left((\mu^\prime_{2i}-\mu^\prime_{1i}-d)^2, \lambda_i^2\right) 
$$
$$
\gamma_i = \left\{
	\begin{array}{ll}
		0  &\mbox{ if } |\mu^\prime_{2i}-\mu^\prime_{1i}-d| < \lambda_i\\
		\mu^\prime_{2i}-\mu^\prime_{1i}-d &\mbox{ otherswise}		
	\end{array}\right. 
$$
$$
\mu_i = \left\{
	\begin{array}{ll}
		(n_1\mu^\prime_{1i}+n_2(\mu^\prime_{2i}-d))/n &\mbox{ if } |\mu^\prime_{2i}-\mu^\prime_{1i}-d| < \lambda_i\\
		\mu^\prime_{1i} &\mbox{othersise}\\
	\end{array}\right.
$$
Similarly, the condition for $\gamma_i=0$ can be rewritten as
$$
\frac{\left|(1/n_2)\sum_{j=1}^{n_2}y_{2ij}-(1/n_1)\sum_{j=1}^{n_1}y_{1ij}\right|}{\sigma_i\sqrt{1/n_1+1/n_2}}<\frac{\lambda_i}{\sigma_i\sqrt{1/n_1+1/n_2}}
$$
where the LHS has the form of the $t$-statistic for two-group comparison with pooled variance, which suggests we choose $\lambda_i$ as $t^*_{1-\frac{q}2}(n_1+n_2-2)\sigma_i\sqrt{1/n_1+1/n_2}$, where $t^*_{1-\frac{q}2}(\cdot)$ is the critical value for two-sided level $q$ tests with the $t$-distribution. The corresponding value for $\alpha_i$ is $(1/2)t^*_{1-\frac{q}2}(n_1+n_2-2)^2$ which is the same as $((S-1)/2)F^*_{1-q}(S-1, n-S)$ suggested in Section~\ref{subsec:alpha}.

\subsection{Solving for $\{\sigma^2_i\}_{i=1}^m$}
\label{sec:sigma2}

To solve for $\{\sigma^2_i\}_{i=1}^m$, consider the negative log-likelihood function (with $\{\sigma^2_i\}_{i=1}^m$ being unknown parameters as well) for model~(\ref{model}) restricted to group $s$ 
$$l_s(\mu, d, \sigma^2; x)=\sum_{i=1}^m\left(\frac{n_s}2\log(2\pi\sigma^2_i)+\frac1{2\sigma^2_i}\sum_{j=1}^{n_s}(x_{sij}-\mu_{si}-d_{sj})^2\right)$$
Taking partial derivatives of $l_s(\cdot)$ with respect to $\mu_{si}, d_{sj}$ and $\sigma^2_i$ respectively and setting the partial derivatives to be zero, we have
$$\mu_{si}=\frac1{n_s}\sum_{j=1}^{n_s}(x_{sij}-d_{sj})$$
$$d_{sj}=\frac{\sum_{i=1}^{m}\frac1{\sigma^2_i}(x_{sij}-\mu_{si})}{\sum_{i=1}^{m}\frac1{\sigma^2_i}}$$
$$\sigma^2_i=\frac1{n_s}\sum_{j=1}^{n_s}(x_{sij}-\mu_{si}-d_{sj})^2$$
Estimates for $\mu_{si}, d_{sj}$ and $\sigma^2_i$ can then be iteratively updated using the above three equations until converge, which is similar to the method of iteratively reweighted least squares for two-way ANOVA models with heteroscedastic errors. In our implementation, we use the following modified equation for updating $\sigma^2_i$ to reduce the estimation bias.
$$\sigma^2_i=\frac1{n_s-1}\sum_{j=1}^{n_s}(x_{sij}-\mu_{si}-d_{sj})^2$$
Same as before, we fix $d_{s1}=0$ and adjust the remaining $d_{sj}$'s and $\mu_{si}$'s accordingly after each iteration to resolve the non-identifiability issue. Note that $\mu_{si}$ and $d_{sj}$ estimated here are based on group $s$ data only and therefore will be discarded. They should not be confused with the parameters estimated from model~(\ref{model}) based on the data from all the groups.

Denote the variance estimated using the above iterative algorithm for gene $i$ in group $s$ as $s^2_{si}$. We take a weighted average of $\{s^2_{si}\}_{s=1}^S$ to pool information from all the groups
$$s^2_i = \frac{\sum_{s=1}^S(n_s-1)s^2_{si}}{n-S}$$
Then we take another weighted average of $s^2_{i}$ and the estimated mean variance across all the genes to obtain a robust estimate for $\sigma^2_i$. That is
$$\widehat{\sigma^2_{i}}=(1-w)s^2_{i}+w\overline{s^2}$$
where $\overline{s^2}=\sum_{i=1}^ms^2_{i}/m$, and the weight $w$ is calculated using the following formula as suggested in~\citet{Ji2005} which is based on an empirical Bayes approach
$$w=\frac{2(m-1)}{n-S+2}\left(\frac1m+\frac{(\overline{s^2})^2}{\sum_{i=1}^m(s^2_i-\overline{s^2})^2}\right)$$
This kind of variance estimation approach is widely used in differential gene expression analysis with small sample sizes~\citep{Ji2010,Smyth2004}. Using the estimated variances $\widehat{\sigma^2_{i}}$, we can then solve for $d_s$, $\gamma_{si}$ and $\mu_i$ as described in Proposition~\ref{solution}.

\subsection{Ranking the genes}
To facilitate downstream analyses, after model fitting, we can calculate a $p$-value for each gene using the $F$-test on the normalized data $y_{sij}=x_{sij}-d_{sj}$ with the estimated variance $\widehat{\sigma_i^2}$. The test statistics for the $i$-th gene is
$$
F_i=\frac1{(S-1)\widehat{\sigma_i^2}}\left\{\sum_{s=1}^S\frac{\left(\sum_{j=1}^{n_s}y_{sij}\right)^2}{n_s}-\frac1n\left(\sum_{s=1}^S\sum_{j=1}^{n_s}y_{sij}\right)^2\right\}$$
$$=\frac1{(S-1)\widehat{\sigma_i^2}}\left\{\sum_{s=1}^Sn_s(\mu_{si}^\prime-d_s)^2-\frac1n\left[\sum_{s=1}^S(n_s(\mu_{si}^\prime-d_s))\right]^2\right\}\\
$$
and the degrees of freedom (DF) of the reference distribution are $(S-1, n-S)$.

When comparing two groups (e.g., group 2 vs group 1), it is equivalent to the two-sample $t$-test with test-statistic
$$t_{i}=\frac{\left|(1/n_2)\sum_{j=1}^{n_2}y_{2ij}-(1/n_1)\sum_{j=1}^{n_1}y_{1ij}\right|}{\sqrt{\widehat{\sigma_i^2}(1/n_1+1/n_2)}}=\frac{(\mu_{2i}^\prime-d_2)-(\mu_{1i}^\prime-d_1)}{\sqrt{\widehat{\sigma_i^2}(1/n_1+1/n_2)}}$$
and DF of $n_1+n_2-2$.

The $p$-values resulted from these tests or the corresponding false discovery rate (FDR) values (e.g., estimated using the Benjamini-Hochberg (BH) procedure~\citep{benjamini1995controlling}) can then be used to rank the genes to facilitate downstream analysis (e.g., the receiver operating characteristic (ROC) curve analysis as in our experiments) or to prioritize follow-up experiments. 

Similarly, log-fold change values between two groups (e.g., group 2 vs group 1) can be estimated as
$$logFC_i=(\mu_{2i}^\prime-d_2)-(\mu_{1i}^\prime-d_1).$$

\section{Experiments}
\label{sec:experiments}

\subsection{Simulation of two-group data}
We simulate RNA-Seq data with a total of $m=1000$ genes ($700$ non-DE genes, $300$ DE genes) in two-group as follows
$$\begin{array}{ll}
	\mu_{1i}=\mu_{2i}\sim N(-3,2) &\mbox{ mean log expression for genes 1-700}\\
	\mu_{1i}\sim N(-3,2), \mu_{2i}=\mu_{1i}+ N(0,1) &\mbox{ mean log expression for genes 701-1000}\\
	d_{sj}\sim N(0, 0.5) &\mbox { log scaling factor} \\
	x_{sij}\sim N(\mu_{si}+d_{sj}, 0.2) &\mbox{ log gene expression}\\
	\log(l_i)\sim Unif(5,10) &\mbox{ log gene lengths} \\
	N_{sj} \sim Unif(3, 5)\times10^7 &\mbox{ library sizes} \\
	\displaystyle\{c_{sij}\}_{i=1}^m \sim Mult\left(N_{sj},\frac{{l_ie^{x_{sij}}}_{i=1}^m}{\sum_{i=1}^ml_ie^{x_{sij}}}\right)+1 &\mbox{ read counts}\\
\end{array}
$$
We use $\log{c_{sij}}$ as data to fit our model. The fitted $\{\gamma_i\}_{i=1}^m$ are plotted in Figure~\ref{gamma}a. Using CPM, RPKM or TPM values computed with formulas in~(\ref{units}) yield the same result. To demonstrate the robustness of our method, we also simulate with $300$ non-DE genes and $700$ DE genes. Furthermore, for DE genes we simulate with $\mu_{2i}=\mu_{1i}+N(1,1)$, which means that the average log-fold change is $1$. Our method still robustly estimates the $\gamma_i$'s (Figure~\ref{gamma}b). We further simulate with $100$ non-DE genes and $900$ DE genes, for which our method still achieves robust estimates when we simulate with $\mu_{2i}=\mu_{1i}+N(1,1)$ (Figure~\ref{gamma}c). Only when we simulate with $900$ DE genes and with $\mu_{2i}=\mu_{1i}+N(3,1)$, our method fails to achieve robust estimates (Figure~\ref{gamma}d).

\begin{figure}[htb]
\begin{center}
\includegraphics[scale=0.4]{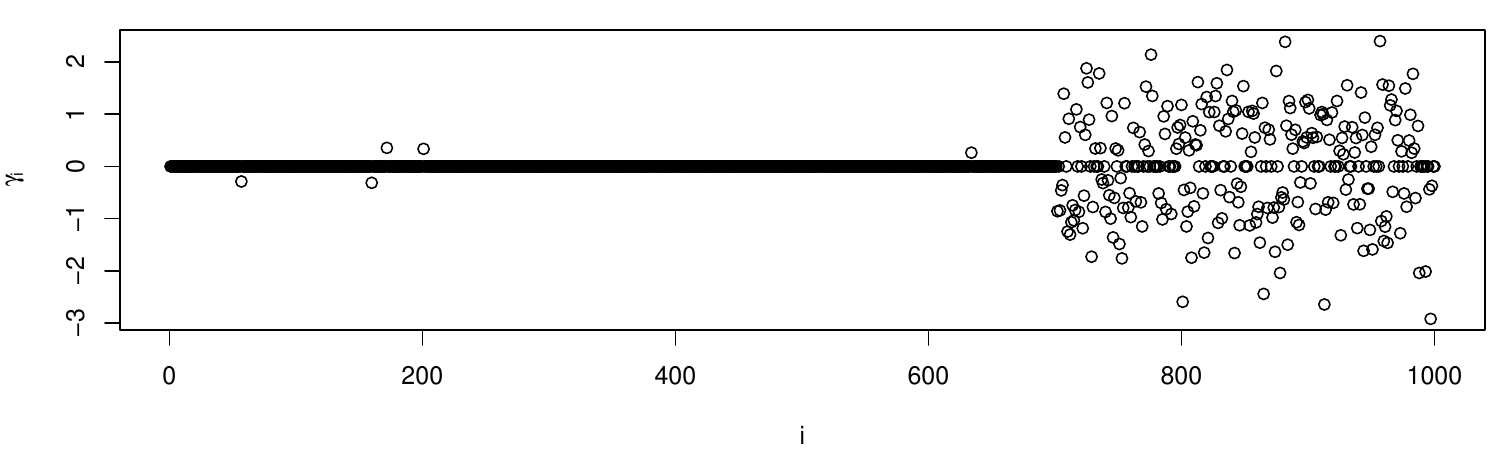}\\
\footnotesize{(a)}\\
\includegraphics[scale=0.4]{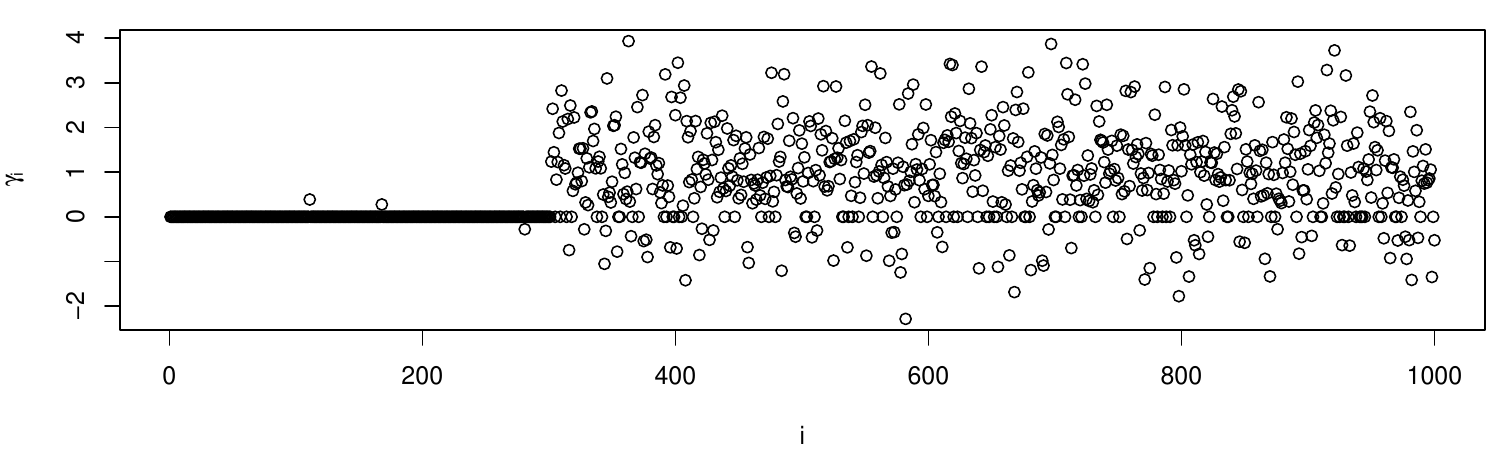}\\
\footnotesize{(b)}\\
\includegraphics[scale=0.4]{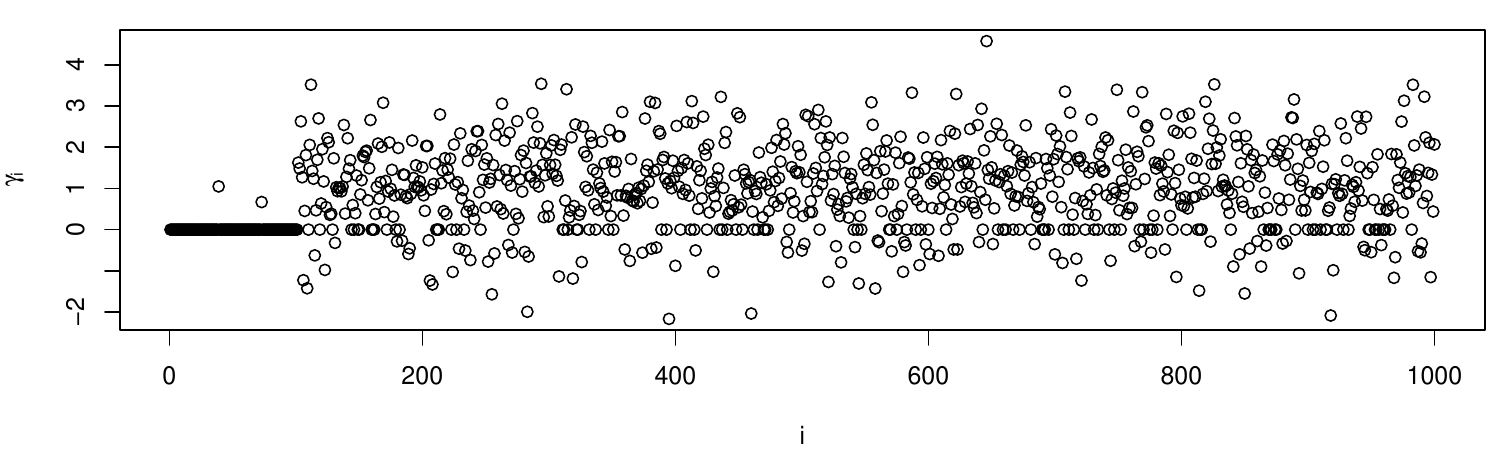}\\
\footnotesize{(c)}\\
\includegraphics[scale=0.4]{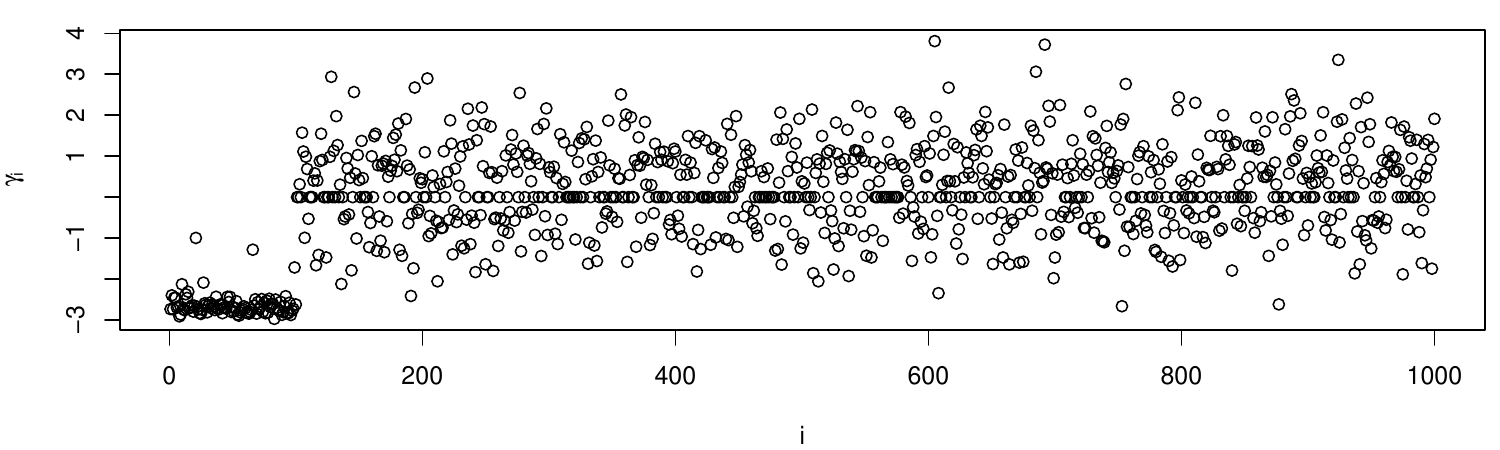}\\
\footnotesize{(d)}\\
\end{center}
\caption{Estimated $\gamma$ from simulated two-group data. \label{gamma}}
\end{figure}

\subsection{Comparison with existing methods using simulated RNA-Seq data}
We compare our method (named rSeqRobust) with edgeR-robust~\citep{edgeR,edgeR-robust}, DESeq2~\citep{DESeq2}, and limma-voom~\citep{limma,limma-voom}, all of which are state-of-the-art methods for detecting differential gene expression from RNA-Seq data. We simulate two-group RNA-Seq data with a total of $m$ = 1000 genes, and different sample sizes ranging from small sample size (two samples in each group, i.e., $n_1=n_2=2$) to large sample size (twelve samples in each group). We simulate both log-normally (LN) distributed read counts, which is the model assumptions of limma-voom and rSeqRobust, as well as negative-binomially (NB) distributed read counts, which is the underlying assumption of edgeR-robust and DESeq2. The distributions of gene expression levels and the distributions of library sizes for both simulations, as well as the distributions of read count dispersions for the NB simulation, are based on a real RNA-Seq dataset~\citep{Pickrell2010}. In the LN simulation, log read counts are assumed to be normally distributed with $\sigma=0.5$.

The simulations are performed using the simulator described in~\citep{edgeR-robust}. In its implementation, edgeR-robust and limma-voom use the TMM (trimmed mean of M values) normalization method proposed in~\citep{Robinson2010a}, and DESeq2 uses the median ratio normalization method proposed in~\citep{Anders2010}. We slightly modify the simulator to allow LN distributed data, as well as variable fold changes. We simulate data sets with 30\% or 70\% DE genes, as well as 50\%, 70\% or 90\% up-regulated genes among all the DE genes. The log-fold change for DE genes (when measured as up-regulation from one group to the other) are assumed to be distributed as $N(\log3, 1)$. Following~\citep{edgeR-robust}, we rank the genes according to the p-values reported by each of the four methods and then perform the ROC curve analysis based on the gene rankings and use the area under the curve (AUC) to evaluate the performance of the four methods for DE gene detection. The results for LN distributed data as well as for NB distributed data are summarized in Tables~\ref{tab:LN} and~\ref{tab:NB}, respectively. For each parameter setting, the highest AUC value is shown in bold font. We can see that rSeqRobust and limma-voom are the best in LN simulations, and rSeqRobust and edgeR-robust are the best in NB simulations, which is consistent with our expectation. rSeqRobust is the best among the four methods when the sample size in each group is greater than or equal to four, even when the data are simulated using the NB distribution. In particular, when rSeqRobust is outperformed by other methods, it is usually only by a small margin and only in relatively easier cases. In difficult cases (e.g., those with $n\geq8, \mbox{DE}\%=70\%, \mbox{Up}\%=90\%$), rSeqRobust outperform other methods by a relatively larger margin.

\begin{table}[htb]
\caption{Comparison of edgeR-robust, DESeq2, limma-voom and rSeqRobust using log-normally distributed data. The table shows the total sample size ($n$), percent of DE genes (DE\%), percent of up-regulated genes among all the DE genes (Up\%), as well as the mean AUCs for all four methods measured using 10 simulated replicates. The standard errors of the mean AUCs are given in parentheses. \label{tab:LN}}
\centering
\begin{tabular}{ccccccc}
  \hline
$n$ & DE\% & Up\% & edgeR-robust & DESeq2 & limma-voom & rSeqRobust \\ 
  \hline
4 & 30 & 50 & 0.819 {\tiny(0.005)} & 0.817 {\tiny(0.005)} & {\textbf{0.824 {\tiny(0.004)}}} & 0.824 {\tiny(0.005)} \\ 
  4 & 30 & 70 & 0.818 {\tiny(0.006)} & 0.815 {\tiny(0.007)} & {\textbf{0.822 {\tiny(0.006)}}} & 0.816 {\tiny(0.006)} \\ 
  4 & 30 & 90 & 0.796 {\tiny(0.005)} & 0.791 {\tiny(0.005)} & {\textbf{0.800 {\tiny(0.005)}}} & 0.779 {\tiny(0.005)} \\ 
  4 & 70 & 50 & 0.829 {\tiny(0.003)} & 0.826 {\tiny(0.003)} & {\textbf{0.834 {\tiny(0.003)}}} & 0.834 {\tiny(0.003)} \\ 
  4 & 70 & 70 & 0.787 {\tiny(0.005)} & 0.779 {\tiny(0.005)} & {\textbf{0.791 {\tiny(0.006)}}} & 0.782 {\tiny(0.006)} \\ 
  4 & 70 & 90 & {\textbf{0.670 {\tiny(0.007)}}} & 0.655 {\tiny(0.007)} & 0.670 {\tiny(0.008)} & 0.634 {\tiny(0.007)} \\ 
  6 & 30 & 50 & 0.850 {\tiny(0.004)} & 0.843 {\tiny(0.003)} & {\textbf{0.854 {\tiny(0.003)}}} & 0.854 {\tiny(0.003)} \\ 
  6 & 30 & 70 & 0.841 {\tiny(0.005)} & 0.834 {\tiny(0.004)} & 0.846 {\tiny(0.005)} & {\textbf{0.847 {\tiny(0.004)}}} \\ 
  6 & 30 & 90 & 0.822 {\tiny(0.005)} & 0.812 {\tiny(0.005)} & 0.826 {\tiny(0.005)} & {\textbf{0.828 {\tiny(0.005)}}} \\ 
  6 & 70 & 50 & 0.852 {\tiny(0.004)} & 0.846 {\tiny(0.003)} & {\textbf{0.855 {\tiny(0.004)}}} & 0.854 {\tiny(0.004)} \\ 
  6 & 70 & 70 & 0.808 {\tiny(0.005)} & 0.800 {\tiny(0.003)} & {\textbf{0.810 {\tiny(0.005)}}} & 0.801 {\tiny(0.005)} \\ 
  6 & 70 & 90 & 0.680 {\tiny(0.012)} & 0.658 {\tiny(0.011)} & {\textbf{0.680 {\tiny(0.012)}}} & 0.673 {\tiny(0.010)} \\ 
  8 & 30 & 50 & 0.877 {\tiny(0.003)} & 0.868 {\tiny(0.002)} & {\textbf{0.881 {\tiny(0.002)}}} & 0.880 {\tiny(0.002)} \\ 
  8 & 30 & 70 & 0.867 {\tiny(0.003)} & 0.858 {\tiny(0.003)} & 0.871 {\tiny(0.003)} & {\textbf{0.876 {\tiny(0.002)}}} \\ 
  8 & 30 & 90 & 0.848 {\tiny(0.004)} & 0.832 {\tiny(0.004)} & 0.850 {\tiny(0.004)} & {\textbf{0.869 {\tiny(0.003)}}} \\ 
  8 & 70 & 50 & 0.878 {\tiny(0.003)} & 0.872 {\tiny(0.003)} & {\textbf{0.882 {\tiny(0.003)}}} & 0.882 {\tiny(0.003)} \\ 
  8 & 70 & 70 & 0.815 {\tiny(0.004)} & 0.806 {\tiny(0.005)} & 0.818 {\tiny(0.003)} & {\textbf{0.841 {\tiny(0.006)}}} \\ 
  8 & 70 & 90 & 0.672 {\tiny(0.005)} & 0.651 {\tiny(0.005)} & 0.671 {\tiny(0.005)} & {\textbf{0.759 {\tiny(0.008)}}} \\ 
  12 & 30 & 50 & 0.898 {\tiny(0.005)} & 0.894 {\tiny(0.004)} & {\textbf{0.902 {\tiny(0.005)}}} & 0.902 {\tiny(0.005)} \\ 
  12 & 30 & 70 & 0.888 {\tiny(0.006)} & 0.884 {\tiny(0.005)} & 0.894 {\tiny(0.006)} & {\textbf{0.902 {\tiny(0.006)}}} \\ 
  12 & 30 & 90 & 0.862 {\tiny(0.005)} & 0.853 {\tiny(0.006)} & 0.867 {\tiny(0.006)} & {\textbf{0.899 {\tiny(0.006)}}} \\ 
  12 & 70 & 50 & 0.890 {\tiny(0.004)} & 0.888 {\tiny(0.004)} & {\textbf{0.892 {\tiny(0.004)}}} & 0.892 {\tiny(0.004)} \\ 
  12 & 70 & 70 & 0.818 {\tiny(0.003)} & 0.811 {\tiny(0.005)} & 0.818 {\tiny(0.004)} & {\textbf{0.879 {\tiny(0.003)}}} \\ 
  12 & 70 & 90 & 0.665 {\tiny(0.009)} & 0.644 {\tiny(0.008)} & 0.663 {\tiny(0.009)} & {\textbf{0.849 {\tiny(0.004)}}} \\ 
  24 & 30 & 50 & 0.930 {\tiny(0.003)} & 0.926 {\tiny(0.004)} & 0.932 {\tiny(0.004)} & {\textbf{0.932 {\tiny(0.003)}}} \\ 
  24 & 30 & 70 & 0.922 {\tiny(0.003)} & 0.916 {\tiny(0.003)} & 0.924 {\tiny(0.004)} & {\textbf{0.932 {\tiny(0.004)}}} \\ 
  24 & 30 & 90 & 0.894 {\tiny(0.004)} & 0.885 {\tiny(0.003)} & 0.896 {\tiny(0.004)} & {\textbf{0.934 {\tiny(0.004)}}} \\ 
  24 & 70 & 50 & 0.930 {\tiny(0.002)} & 0.929 {\tiny(0.002)} & 0.933 {\tiny(0.002)} & {\textbf{0.934 {\tiny(0.002)}}} \\ 
  24 & 70 & 70 & 0.863 {\tiny(0.006)} & 0.850 {\tiny(0.005)} & 0.865 {\tiny(0.007)} & {\textbf{0.932 {\tiny(0.003)}}} \\ 
  24 & 70 & 90 & 0.669 {\tiny(0.007)} & 0.643 {\tiny(0.006)} & 0.670 {\tiny(0.007)} & {\textbf{0.926 {\tiny(0.002)}}} \\ 
   \hline
\end{tabular}
\end{table}

\begin{table}[htb]
\caption{Comparisons of edgeR, DESeq2, limma and rSeqRobust using negative-binomially distributed data. See Table~\ref{tab:LN} for detailed description of the columns.} \label{tab:NB}
\centering
\begin{tabular}{ccccccc}
  \hline
$n$ & DE\% & Up\% & edgeR-robust & DESeq2 & limma-voom & rSeqRobust \\ 
  \hline
4 & 30 & 50 & {\textbf{0.835 {\tiny(0.005)}}} & 0.827 {\tiny(0.005)} & 0.825 {\tiny(0.006)} & 0.830 {\tiny(0.006)} \\ 
  4 & 30 & 70 & {\textbf{0.819 {\tiny(0.005)}}} & 0.810 {\tiny(0.006)} & 0.806 {\tiny(0.004)} & 0.804 {\tiny(0.005)} \\ 
  4 & 30 & 90 & {\textbf{0.812 {\tiny(0.006)}}} & 0.799 {\tiny(0.005)} & 0.803 {\tiny(0.006)} & 0.772 {\tiny(0.005)} \\ 
  4 & 70 & 50 & {\textbf{0.829 {\tiny(0.005)}}} & 0.820 {\tiny(0.006)} & 0.817 {\tiny(0.005)} & 0.821 {\tiny(0.005)} \\ 
  4 & 70 & 70 & {\textbf{0.790 {\tiny(0.005)}}} & 0.779 {\tiny(0.005)} & 0.780 {\tiny(0.005)} & 0.774 {\tiny(0.004)} \\ 
  4 & 70 & 90 & {\textbf{0.670 {\tiny(0.006)}}} & 0.653 {\tiny(0.005)} & 0.658 {\tiny(0.007)} & 0.625 {\tiny(0.006)} \\ 
  6 & 30 & 50 & {\textbf{0.854 {\tiny(0.006)}}} & 0.850 {\tiny(0.006)} & 0.847 {\tiny(0.007)} & 0.853 {\tiny(0.007)} \\ 
  6 & 30 & 70 & {\textbf{0.852 {\tiny(0.005)}}} & 0.844 {\tiny(0.004)} & 0.845 {\tiny(0.005)} & 0.849 {\tiny(0.005)} \\ 
  6 & 30 & 90 & {\textbf{0.832 {\tiny(0.005)}}} & 0.826 {\tiny(0.005)} & 0.824 {\tiny(0.005)} & 0.826 {\tiny(0.006)} \\ 
  6 & 70 & 50 & {\textbf{0.864 {\tiny(0.004)}}} & 0.860 {\tiny(0.004)} & 0.856 {\tiny(0.004)} & 0.863 {\tiny(0.004)} \\ 
  6 & 70 & 70 & {\textbf{0.817 {\tiny(0.004)}}} & 0.806 {\tiny(0.004)} & 0.806 {\tiny(0.005)} & 0.799 {\tiny(0.005)} \\ 
  6 & 70 & 90 & {\textbf{0.674 {\tiny(0.006)}}} & 0.651 {\tiny(0.003)} & 0.662 {\tiny(0.005)} & 0.653 {\tiny(0.006)} \\ 
  8 & 30 & 50 & {\textbf{0.887 {\tiny(0.004)}}} & 0.886 {\tiny(0.004)} & 0.880 {\tiny(0.004)} & 0.885 {\tiny(0.004)} \\ 
  8 & 30 & 70 & 0.873 {\tiny(0.003)} & 0.869 {\tiny(0.002)} & 0.867 {\tiny(0.003)} & {\textbf{0.876 {\tiny(0.003)}}} \\ 
  8 & 30 & 90 & 0.853 {\tiny(0.004)} & 0.842 {\tiny(0.004)} & 0.844 {\tiny(0.004)} & {\textbf{0.863 {\tiny(0.004)}}} \\ 
  8 & 70 & 50 & {\textbf{0.875 {\tiny(0.004)}}} & 0.872 {\tiny(0.003)} & 0.868 {\tiny(0.005)} & 0.871 {\tiny(0.004)} \\ 
  8 & 70 & 70 & 0.822 {\tiny(0.004)} & 0.802 {\tiny(0.007)} & 0.813 {\tiny(0.005)} & {\textbf{0.837 {\tiny(0.005)}}} \\ 
  8 & 70 & 90 & 0.686 {\tiny(0.008)} & 0.660 {\tiny(0.007)} & 0.674 {\tiny(0.008)} & {\textbf{0.764 {\tiny(0.005)}}} \\ 
  12 & 30 & 50 & 0.890 {\tiny(0.003)} & {\textbf{0.891 {\tiny(0.004)}}} & 0.885 {\tiny(0.004)} & 0.890 {\tiny(0.004)} \\ 
  12 & 30 & 70 & 0.896 {\tiny(0.003)} & 0.895 {\tiny(0.004)} & 0.889 {\tiny(0.003)} & {\textbf{0.901 {\tiny(0.003)}}} \\ 
  12 & 30 & 90 & 0.876 {\tiny(0.003)} & 0.863 {\tiny(0.004)} & 0.866 {\tiny(0.004)} & {\textbf{0.898 {\tiny(0.004)}}} \\ 
  12 & 70 & 50 & 0.897 {\tiny(0.003)} & 0.898 {\tiny(0.003)} & 0.891 {\tiny(0.003)} & {\textbf{0.899 {\tiny(0.003)}}} \\ 
  12 & 70 & 70 & 0.841 {\tiny(0.005)} & 0.828 {\tiny(0.006)} & 0.832 {\tiny(0.005)} & {\textbf{0.891 {\tiny(0.003)}}} \\ 
  12 & 70 & 90 & 0.668 {\tiny(0.006)} & 0.637 {\tiny(0.005)} & 0.651 {\tiny(0.006)} & {\textbf{0.864 {\tiny(0.003)}}} \\ 
  24 & 30 & 50 & {\textbf{0.929 {\tiny(0.003)}}} & 0.928 {\tiny(0.003)} & 0.923 {\tiny(0.003)} & 0.928 {\tiny(0.003)} \\ 
  24 & 30 & 70 & 0.914 {\tiny(0.004)} & 0.915 {\tiny(0.004)} & 0.912 {\tiny(0.003)} & {\textbf{0.923 {\tiny(0.003)}}} \\ 
  24 & 30 & 90 & 0.889 {\tiny(0.003)} & 0.878 {\tiny(0.004)} & 0.881 {\tiny(0.004)} & {\textbf{0.921 {\tiny(0.003)}}} \\ 
  24 & 70 & 50 & 0.926 {\tiny(0.002)} & {\textbf{0.928 {\tiny(0.002)}}} & 0.921 {\tiny(0.002)} & 0.925 {\tiny(0.002)} \\ 
  24 & 70 & 70 & 0.854 {\tiny(0.005)} & 0.834 {\tiny(0.007)} & 0.841 {\tiny(0.005)} & {\textbf{0.923 {\tiny(0.003)}}} \\ 
  24 & 70 & 90 & 0.666 {\tiny(0.007)} & 0.630 {\tiny(0.004)} & 0.648 {\tiny(0.007)} & {\textbf{0.925 {\tiny(0.002)}}} \\ 
   \hline
\end{tabular}
\end{table}

\subsection{Comparison with existing methods using real RNA-Seq data}

Again, we compare our method with all three other methods using real RNA-Seq data generated from the SEQC project~\citep{seqc2014comprehensive}. We use the data table \textit{ILM\_refseq\_gene\_AGR} from the \textbf{seqc} R package for our experiment. There are four biological samples A, B, C and D, where A and B are two different human RNA reference libraries (Aglient's Universal Human Reference RNA and Life Technologies's Human Brain Reference RNA), and C and D are different mixtures of A and B with C=$75\%$A+$25\%$B and D=$25\%$A+$75\%$B, respective. Each of the four biological samples were sequenced with four replicates so that there are a total of 16 RNA-Seq samples. As a validation, the TaqMan RT-PCR technology was also used to measure 955 genes selected from all 25,794 genes for all 16 samples~\citep{seqc2014comprehensive}. 

We compare four methods in three comparisons (A vs B, A vs C and A vs D), each one is a two-group comparison with four replicates in each group (i.e., $n=8$). Since A and B are from rather different human RNA reference libraries, we expect many DE genes for the comparison of A vs B, less DE genes for A vs D, and even less DE genes for A vs C. First, using the TaqMan RT-PCR measurements as the gold standard, we define DE genes to be those with $p$-value from two sample $t$-test (i.e., 4 vs 4 samples in each group based on the RT-PCR data) smaller than 0.05, and fold change greater than 2, as these are commonly used criteria in DE studies. With such criteria, we identify 54.2\%, 15.3\% and 47.9\% of all 955 genes as DE genes, respectively, for the three comparisons, which is consistent with our expectation. Then, we run the four methods on the RNA-Seq data from all 25,794 genes for each of the three comparisons and extract the $p$-values for the 955 genes with TaqMan data. Finally, using the DE and non-DE genes identified in the total 955 genes with TaqMan measurements, we compare the four methods in terms of their ROC curves and AUCs. The results are given in Table~\ref{tab:auc.real.data} and Figure~\ref{fig:roc.real.data}. We can see that rSeqRobust and limma-voom are the two best performing methods, with rSeqRobust slightly outperforming limma-voom. We also notice that edgeR-robust and DESeq2 both report $p$-values of exactly zero for many genes, which dampened their performance. 

We also report the running times (in seconds) of all four methods for each of the three comparisons in Table~\ref{tab:time.real.data}. We can see that rSeqRobust is faster than all other methods.

\begin{table}[htb]
\caption{Comparisons of edgeR, DESeq2, limma and rSeqRobust using real RNA-Seq data from the SEQC project~\citep{seqc2014comprehensive}. The table shows percent of DE genes (DE\%), percent of up-regulated genes among all the DE genes (Up\%), as well as AUCs for all four methods in three comparisons (A vs B, A vs C and A vs D). } \label{tab:auc.real.data}
\centering
\begin{tabular}{ccccccc}
  \hline
 & DE\% & Up\% & edgeR-robust & DESeq2 & limma-voom & rSeqRobust \\ 
  \hline
A vs B & 54.2 & 56.8 & 0.670 & 0.696 & {\textbf{0.883}} & 0.882 \\ 
  A vs C & 15.3 & 0.7 & 0.927 & 0.906 & 0.956 & {\textbf{0.972}} \\ 
  A vs D & 47.9 & 56.7 & 0.742 & 0.742 & 0.868 & {\textbf{0.903}} \\ 
   \hline
\end{tabular}
\end{table}

\begin{table}[htb]
\caption{Comparisons of of edgeR, DESeq2, limma and rSeqRobust using real RNA-Seq data from the SEQC project~\citep{seqc2014comprehensive}. The table show the running times (in seconds) for all four methods in three comparisons (A vs B, A vs C and A vs D).} \label{tab:time.real.data}
\centering
\begin{tabular}{ccccc}
  \hline
 & edgeR-robust & DESeq2 & limma-voom & rSeqRobust \\ 
  \hline
A vs B & 56.230 & 16.140 &  3.290 & {\textbf{ 1.710}} \\ 
  A vs C & 53.260 & 15.750 &  3.360 & {\textbf{ 1.720}} \\ 
  A vs D & 53.170 & 15.470 &  3.330 & {\textbf{ 1.700}} \\ 
   \hline
\end{tabular}
\end{table}

\begin{figure}[htb]
\begin{center}
\includegraphics[scale=0.4]{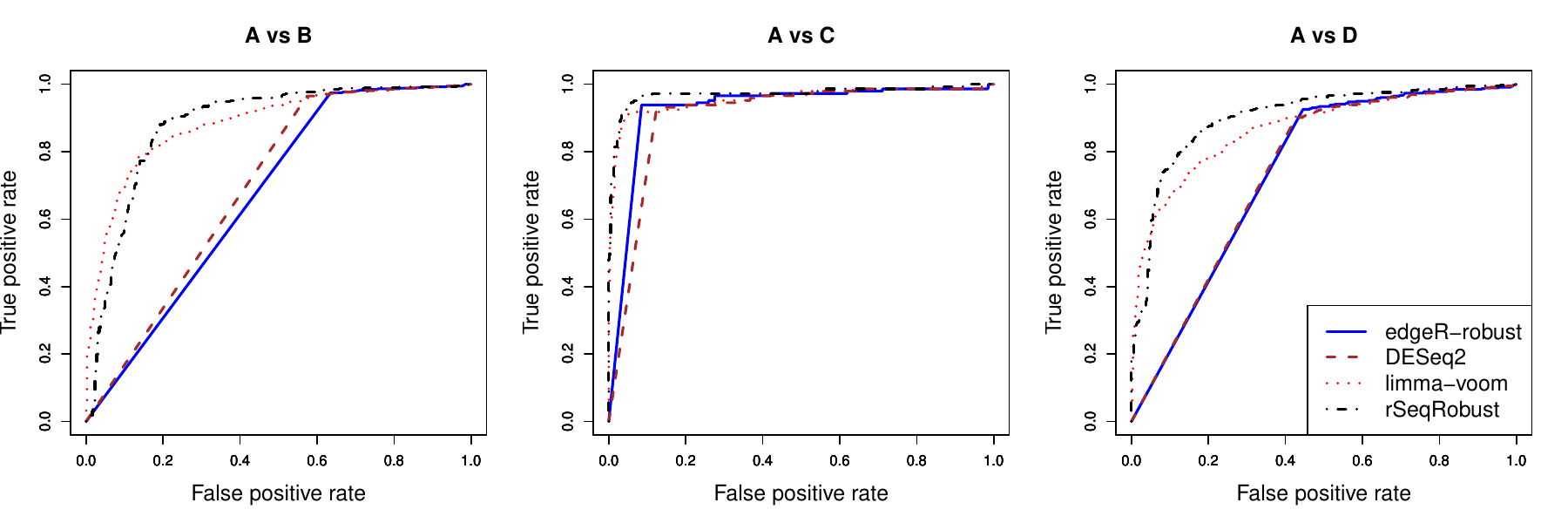}\\
\end{center}
\caption{Comparisons of edgeR, DESeq2, limma and rSeqRobust using real RNA-Seq data from the SEQC project~\citep{seqc2014comprehensive}. The figures show the ROC curves for all four methods in three comparisons (A vs B, A vs C and A vs D). \label{fig:roc.real.data}}
\end{figure}

\section{Discussion}
\label{sec:discussion}

It is shown in our simulations that our proposed approach is able to reliably normalize the data and detect differential expression in some cases when more than half of the genes are differentially expressed in an asymmetric manner. This is hard to achieve with other robust methods such as the median or trimmed mean based normalization approaches~\citep{Anders2010,Robinson2010a}. This is attributed to the L0-penalized likelihood used by our model. Typically, L0-penalized models are difficult to fit due to their non-convexity. However, in our case it is easily manageable once we reduce the model fitting to a univariate or bivariate optimization problem. Adding an L0 penalty $p(\gamma)$ results in hard thresholding on $\gamma$, which has been shown to be a general case in~\citet{She2011}. The hard thresholding facilitates the inference on indicators $\{\tau_i\}_{i=1}^m$ for differential gene expression.

In the past decade, many methods and software packages have been developed for differential expression analysis from RNA-Seq data. In this paper, we compare our method with edgeR-robust, DESeq2 and limma-voom since they have excellent performance according to several studies focusing on thorough comparisons of methods and software packages for DE detection~\citep{Rapaport2013,soneson2013comparison,zhang2014comparative,seyednasrollah2015comparison}. Also, all three methods provide user-friendly R packages for convenient comparison based on simulated and real data. Nevertheless, there are still many other excellent and widely used software packages available for the users to choose from, such as cuffdiff~\citep{trapnell2012differential} and EBSeq~\citep{leng2013ebseq}. This paper is not intended to be comprehensive, but rather to present a new way to integrate normalization and DE gene detection.

The R program for generating the results in this paper is available at \url{http://www-personal.umich.edu/~jianghui/rseqrobust/}.

\section*{Appendix}
\label{sec:appendix}

\begin{proof}[Proof of the unit-free property]
Let $c_{sij}, s=1,\ldots,S, i=1,\ldots,m, j=1,\ldots,n_s$ be the read count for the $i$-th gene in the $j$-th sample in the $s$-th group. Assume that a small positive number $\epsilon>0$ has been added to all the read counts so that $c_{sij}\geq\epsilon>0, \forall s, i, j$. Let $cpm_{sij}$, $rpkm_{sij}$ and $tpm_{sij}$ be the corresponding CPM, RPKM and TPM values for the $i$-th gene in the $j$-th sample in the $s$-th group defined as follows, respectively.
$$
\begin{array}{lll}
cpm_{sij}&=& 10^6c_{sij}/\sum_ic_{sij}\\
rpkm_{sij}&=& 10^3cpm_{sij}/l_i\\
tpm_{sij}&=& 10^6rpkm_{sij}/\sum_irpkm_{sij}
\end{array}
$$
We have 
$$\begin{array}{lll}
\log(cpm_{sij})&=& \log(c_{sij}) + A_{sj}\\
\log(rpkm_{sij})&=& \log(cpm_{sij}) + B_i = \log(c_{sij}) + A_{sj} + B_i\\
\log(tpm_{sij})&=& \log(rpkm_{sij}) + C_{sj} = \log(c_{sij}) + A_{sj} + B_i + C_{sj}
\end{array}
$$
\end{proof}
where $A_{sj}=\log(10^6/\sum_ic_{sij})$, $B_i=\log(10^3/l_i)$ and $C_{sj}=\log(10^6/\sum_irpkm_{sij})$ are constants.
Due to the structure of model~(\ref{model}), when we change $x_{sij}$ from $\log(c_{sij})$ to one of $\log(cpm_{sij})$, $\log(rpkm_{sij})$ and $\log(tpm_{sij})$, the constants $A_{sj}$ and $C_{sj}$ will be absorbed into the parameter $d_{sj}$, and the constant $B_i$ will be absorbed into the parameters $\{\mu_{si}\}_{s=1}^S$ simultaneously. As a result, $\tau_i$, the parameter of interest, will not change because it is determined only by the differences among $\{\mu_{si}\}_{s=1}^S$.

\begin{proof}[Proof of Proposition~\ref{solution}]
We have 
$$\begin{array}{ll}
f(\mu,\gamma,d)=&\displaystyle\sum_{i=1}^m\frac1{2\sigma_i^2}\left(\sum_{j=1}^{n_1}(x_{1ij}-\mu_i-d_{1j})^2+\sum_{s=2}^S\sum_{j=1}^{n_s}(x_{sij}-\mu_i-\gamma_{si}-d_{sj})^2\right)\\
&\displaystyle+\sum_{i=1}^m\alpha_i1(\sum_{s=2}^S|\gamma_{si}|>0)
\end{array}$$
Therefore
$$\frac{\partial f}{\partial d_{1j}}=\sum_{i=1}^{m}-\frac1{\sigma_i^2}(x_{1ij}-\mu_i-d_{1j})=0$$
which gives
$$d_{1j}=\frac{\sum_{i=1}^{m}\frac1{\sigma_i^2}(x_{1ij}-\mu_i)}{\sum_{i=1}^{m}\frac1{\sigma_i^2}}$$
Consequently
$$d_{1j}-d_{11}=\frac{\sum_{i=1}^{m}\frac1{\sigma_i^2}(x_{1ij}-x_{1i1})}{\sum_{i=1}^{m}\frac1{\sigma_i^2}}=d^\prime_{1j}$$
Since we fix $d_{11}=0$, we have $d_{1j}=d^\prime_{1j}$.
Similarly, we have 
$$d_{sj}-d_{s1}=\frac{\sum_{i=1}^{m}\frac1{\sigma_i^2}(x_{sij}-x_{si1})}{\sum_{i=1}^{m}\frac1{\sigma_i^2}}=d^\prime_{sj}, s=2,\ldots,S$$
Denote $d_{s1}$ as $d_s$, $s=1,\ldots,S$, we have $d_1=d_{11}=0$ and $d_{sj}=d_s+d^\prime_{sj}, s=1,\ldots,S$.
Now we have
$$\begin{array}{ll}
f(\mu,\gamma,d)=&\displaystyle\sum_{i=1}^m\left(\frac1{2\sigma_i^2}\sum_{j=1}^{n_1}(x_{1ij}-\mu_i-d^\prime_{1j})^2+\frac1{2\sigma_i^2}\sum_{s=2}^S\sum_{j=1}^{n_s}(x_{sij}-\mu_i-\gamma_{si}-d_s-d^\prime_{sj})^2\right.\\
&\displaystyle\left.+\alpha_i1(\sum_{s=2}^S|\gamma_{si}|>0)\right)
\end{array}$$
which can be written as
$$f(\mu,\gamma,d)=\sum_{i=1}^mh_i(\mu_i, \gamma_{2i},\ldots,\gamma_{Si})$$
where the $\mu_i$'s and the $\gamma_{si}$'s can all be considered as functions of $(d_2,\ldots,d_S)$. Here we first work out these functions, i.e., solutions for the $\mu_i$'s and the $\gamma_{si}$'s when the $d_s$'s are considered fixed. Fixing $d_2,\ldots,d_S$, $f$ can be minimized by minimizing each $h_i$ separately. When $\sum_{s=2}^S|\gamma_{si}|>0$, $h_i$ is easily minimized by 
$$\mu_i = \frac1{n_1} \sum_{j=1}^{n_1}(x_{1ij}-d^\prime_{1j})=\mu^\prime_{1i}$$
and 
$$\gamma_{si}=\frac1{n_s} \sum_{j=1}^{n_s}(x_{sij}-\mu^\prime_{1i}-d_s-d^\prime_{sj})=\frac1{n_s} \sum_{j=1}^{n_s}(x_{sij}-d^\prime_{sj})-\mu^\prime_{1i}-d_s=\mu^\prime_{si} -\mu^\prime_{1i}-d_s$$
Similarly, when $\sum_{s=2}^S|\gamma_{si}|=0$, i.e., $\gamma_{2i}=\cdots=\gamma_{Si}=0$, we have 
$$\mu_i=\frac1{\sum_{s=1}^Sn_s}(n_1\mu^\prime_{1i}+\sum_{s=2}^Sn_s(\mu^\prime_{si}-d_s))=\frac{\sum_{s=1}^Sn_s(\mu^\prime_{si}-d_s)}{n}$$
Changing from $\sum_{s=2}^S|\gamma_{si}|>0$ to $\sum_{s=2}^S|\gamma_{si}|=0$, the increase in the maximum achievable value of $h_i$ is
$$
\begin{array}{l}
\displaystyle h_i\left(\frac{\sum_{s=1}^Sn_s(\mu^\prime_{si}-d_s)}{n}, 0,\dots,0\right)-h_i(\mu^\prime_{1i}, \mu^\prime_{2i} -\mu^\prime_{1i}-d_2,\dots,\mu^\prime_{Si} -\mu^\prime_{1i}-d_S)\\
\displaystyle=\frac1{2\sigma_i^2}\left\{\sum_{s=1}^Sn_s(\mu_{si}^\prime-d_s)^2-\frac1n\left[\sum_{s=1}^S(n_s(\mu_{si}^\prime-d_s))\right]^2\right\}-\alpha_i
\end{array}
$$
which is a quadratic function of $d_2,\ldots,d_S$. Therefore
$$
	\gamma_{si} = \left\{
	\begin{array}{ll}
		0  &\mbox{ if } \displaystyle\frac1{2\sigma_i^2}\left\{\sum_{s=1}^Sn_s(\mu_{si}^\prime-d_s)^2-\frac1n\left[\sum_{s=1}^S(n_s(\mu_{si}^\prime-d_s))\right]^2\right\}<\alpha_i\\
		\mu^\prime_{si}-\mu^\prime_{1i}-d_s &\mbox{ otherswise}		
	\end{array}\right.
$$
Now we only need to solve for $d_2,\ldots,d_S$. We have
$$
\begin{array}{l}
\displaystyle d_2,\ldots,d_S=\argmin_{d_2,\ldots,d_S}\sum_{i=1}^m\min\left[h_i\left(\frac{\sum_{s=1}^Sn_s(\mu^\prime_{si}-d_s)}{n}, 0,\dots,0\right)\right.\\
,h_i(\mu^\prime_{1i}, \mu^\prime_{2i} -\mu^\prime_{1i}-d_2,\dots,\mu^\prime_{Si} -\mu^\prime_{1i}-d_S)]
\end{array}
$$
which can be simplified as
$$d_2,\ldots,d_S=\argmin_{d_2,\ldots,d_S}\sum_{i=1}^m\min\left(\frac1{2\sigma_i^2}\left\{\sum_{s=1}^Sn_s(\mu_{si}^\prime-d_s)^2-\frac1n\left[\sum_{s=1}^S(n_s(\mu_{si}^\prime-d_s))\right]^2\right\}-\alpha_i, 0\right)$$
since $h_i(\mu^\prime_{1i}, \mu^\prime_{2i} -\mu^\prime_{1i}-d_2,\dots,\mu^\prime_{Si} -\mu^\prime_{1i}-d_S)$ is actually not a function of $d_2,\ldots,d_S$.
Therefore,
$$d_2,\ldots,d_S=\argmin_{d_2,\ldots,d_S}\sum_{i=1}^m\min\left(\frac1{2\sigma_i^2}\left\{\sum_{s=1}^Sn_s(\mu_{si}^\prime-d_s)^2-\frac1n\left[\sum_{s=1}^S(n_s(\mu_{si}^\prime-d_s))\right]^2\right\}, \alpha_i\right)$$ 
\end{proof}

\bibliographystyle{abbrvnat}
\bibliography{manuscript}

\begin{thebibliography}{35}
\providecommand{\natexlab}[1]{#1}
\providecommand{\url}[1]{\texttt{#1}}
\expandafter\ifx\csname urlstyle\endcsname\relax
  \providecommand{\doi}[1]{doi: #1}\else
  \providecommand{\doi}{doi: \begingroup \urlstyle{rm}\Url}\fi

\bibitem[Anders and Huber(2010)]{Anders2010}
S.~Anders and W.~Huber.
\newblock Differential expression analysis for sequence count data.
\newblock \emph{Genome Biol}, 11\penalty0 (10):\penalty0 R106, 2010.

\bibitem[Benjamini and Hochberg(1995)]{benjamini1995controlling}
Y.~Benjamini and Y.~Hochberg.
\newblock Controlling the false discovery rate: a practical and powerful
  approach to multiple testing.
\newblock \emph{Journal of the royal statistical society. Series B
  (Methodological)}, pages 289--300, 1995.

\bibitem[Bolstad et~al.(2003)Bolstad, Irizarry, Astrand, and
  Speed]{Bolstad2003}
B.~M. Bolstad, R.~A. Irizarry, M.~Astrand, and T.~P. Speed.
\newblock A comparison of normalization methods for high density
  oligonucleotide array data based on variance and bias.
\newblock \emph{Bioinformatics}, 19\penalty0 (2):\penalty0 185--193, Jan 2003.

\bibitem[Consortium et~al.(2014)]{seqc2014comprehensive}
S.-I. Consortium et~al.
\newblock A comprehensive assessment of rna-seq accuracy, reproducibility and
  information content by the sequencing quality control consortium.
\newblock \emph{Nature biotechnology}, 32\penalty0 (9):\penalty0 903--914,
  2014.

\bibitem[Dillies et~al.(2013)Dillies, Rau, Aubert, Hennequet-Antier,
  Jeanmougin, Servant, Keime, Marot, Castel, Estelle, Guernec, Jagla, Jouneau,
  Laloe, {Le Gall}, Schaeffer, {Le Crom}, Guedj, Jaffrezic, and ~]{Dillies2013}
M.-A. Dillies, A.~Rau, J.~Aubert, C.~Hennequet-Antier, M.~Jeanmougin,
  N.~Servant, C.~Keime, G.~Marot, D.~Castel, J.~Estelle, G.~Guernec, B.~Jagla,
  L.~Jouneau, D.~Laloe, C.~{Le Gall}, B.~Schaeffer, L.ffer, S.~{Le Crom},
  M.~Guedj, F.~Jaffrezic, and F.~S.~C. ~.
\newblock A comprehensive evaluation of normalization methods for illumina
  high-throughput rna sequencing data analysis.
\newblock \emph{Brief Bioinform}, 14\penalty0 (6):\penalty0 671--683, Nov 2013.

\bibitem[Fan and Li(2001)]{Fan2001}
J.~Fan and R.~Li.
\newblock Variable selection via nonconcave penalized likelihood and its oracle
  properties.
\newblock \emph{Journal of the American Statistical Association}, 96\penalty0
  (456):\penalty0 1348--1360, 2001.

\bibitem[Friedman et~al.(2010)Friedman, Hastie, and Tibshirani]{Friedman2010}
J.~Friedman, T.~Hastie, and R.~Tibshirani.
\newblock Regularization paths for generalized linear models via coordinate
  descent.
\newblock \emph{Journal of statistical software}, 33\penalty0 (1):\penalty0 1,
  2010.

\bibitem[Hastie et~al.(2009)Hastie, Tibshirani, and Friedman]{Hastie2009}
T.~J. Hastie, R.~J. Tibshirani, and J.~H. Friedman.
\newblock \emph{The elements of statistical learning: data mining, inference,
  and prediction}.
\newblock Springer, 2009.

\bibitem[Ji and Liu(2010)]{Ji2010}
H.~Ji and X.~S. Liu.
\newblock {Analyzing 'omics data using hierarchical models}.
\newblock \emph{Nature biotechnology}, 28\penalty0 (4):\penalty0 337--340,
  2010.

\bibitem[Ji and Wong(2005)]{Ji2005}
H.~Ji and W.~H. Wong.
\newblock Tilemap: create chromosomal map of tiling array hybridizations.
\newblock \emph{Bioinformatics}, 21\penalty0 (18):\penalty0 3629--3636, 2005.

\bibitem[Jiang and Wong(2009)]{Jiang2009}
H.~Jiang and W.~H. Wong.
\newblock Statistical inferences for isoform expression in rna-seq.
\newblock \emph{Bioinformatics}, 25\penalty0 (8):\penalty0 1026--1032, Apr
  2009.

\bibitem[Law et~al.(2014)Law, Chen, Shi, and Smyth]{limma-voom}
C.~W. Law, Y.~Chen, W.~Shi, and G.~K. Smyth.
\newblock Voom: precision weights unlock linear model analysis tools for
  rna-seq read counts.
\newblock \emph{Genome Biol}, 15\penalty0 (2):\penalty0 R29, 2014.

\bibitem[Leng et~al.(2013)Leng, Dawson, Thomson, Ruotti, Rissman, Smits, Haag,
  Gould, Stewart, and Kendziorski]{leng2013ebseq}
N.~Leng, J.~A. Dawson, J.~A. Thomson, V.~Ruotti, A.~I. Rissman, B.~M. Smits,
  J.~D. Haag, M.~N. Gould, R.~M. Stewart, and C.~Kendziorski.
\newblock Ebseq: an empirical bayes hierarchical model for inference in rna-seq
  experiments.
\newblock \emph{Bioinformatics}, 29\penalty0 (8):\penalty0 1035--1043, 2013.

\bibitem[Li et~al.(2010)Li, Ruotti, Stewart, Thomson, and Dewey]{Li2010a}
B.~Li, V.~Ruotti, R.~M. Stewart, J.~A. Thomson, and C.~N. Dewey.
\newblock Rna-seq gene expression estimation with read mapping uncertainty.
\newblock \emph{Bioinformatics}, 26\penalty0 (4):\penalty0 493--500, Feb 2010.

\bibitem[Li et~al.(2012)Li, Witten, Johnstone, and Tibshirani]{Li2012}
J.~Li, D.~M. Witten, I.~M. Johnstone, and R.~Tibshirani.
\newblock Normalization, testing, and false discovery rate estimation for
  rna-sequencing data.
\newblock \emph{Biostatistics}, 13\penalty0 (3):\penalty0 523--538, Jul 2012.

\bibitem[Love et~al.(2014)Love, Huber, and Anders]{DESeq2}
M.~I. Love, W.~Huber, and S.~Anders.
\newblock Moderated estimation of fold change and dispersion for rna-seq data
  with deseq2.
\newblock \emph{Genome Biology}, 15\penalty0 (12):\penalty0 550, 2014.

\bibitem[Mortazavi et~al.(2008)Mortazavi, Williams, McCue, Schaeffer, and
  Wold]{Mortazavi2008}
A.~Mortazavi, B.~A. Williams, K.~McCue, L.~Schaeffer, and B.~Wold.
\newblock Mapping and quantifying mammalian transcriptomes by rna-seq.
\newblock \emph{Nat Methods}, 5\penalty0 (7):\penalty0 621--628, Jul 2008.

\bibitem[{Pachter}(2011)]{Pachter2011}
L.~{Pachter}.
\newblock {Models for transcript quantification from RNA-Seq}.
\newblock \emph{ArXiv e-prints}, Apr. 2011.

\bibitem[Pickrell et~al.(2010)Pickrell, Marioni, Pai, Degner, Engelhardt,
  Nkadori, Veyrieras, Stephens, Gilad, and Pritchard]{Pickrell2010}
J.~K. Pickrell, J.~C. Marioni, A.~A. Pai, J.~F. Degner, B.~E. Engelhardt,
  E.~Nkadori, J.-B. Veyrieras, M.~Stephens, Y.~Gilad, and J.~K. Pritchard.
\newblock Understanding mechanisms underlying human gene expression variation
  with rna sequencing.
\newblock \emph{Nature}, 464\penalty0 (7289):\penalty0 768--772, 2010.

\bibitem[Rapaport et~al.(2013)Rapaport, Khanin, Liang, Pirun, Krek, Zumbo,
  Mason, Socci, and Betel]{Rapaport2013}
F.~Rapaport, R.~Khanin, Y.~Liang, M.~Pirun, A.~Krek, P.~Zumbo, C.~E. Mason,
  N.~D. Socci, and D.~Betel.
\newblock Comprehensive evaluation of differential gene expression analysis
  methods for rna-seq data.
\newblock \emph{Genome Biology}, 14\penalty0 (9):\penalty0 R95, Sep 2013.

\bibitem[Ritchie et~al.(2015)Ritchie, Phipson, Wu, Hu, Law, Shi, and
  Smyth]{limma}
M.~E. Ritchie, B.~Phipson, D.~Wu, Y.~Hu, C.~W. Law, W.~Shi, and G.~K. Smyth.
\newblock limma powers differential expression analyses for {RNA}-sequencing
  and microarray studies.
\newblock \emph{Nucleic Acids Research}, 43\penalty0 (7):\penalty0 e47, Jan
  2015.

\bibitem[Robinson and Oshlack(2010)]{Robinson2010a}
M.~D. Robinson and A.~Oshlack.
\newblock A scaling normalization method for differential expression analysis
  of rna-seq data.
\newblock \emph{Genome Biol}, 11\penalty0 (3):\penalty0 R25, 2010.

\bibitem[Robinson et~al.(2010)Robinson, McCarthy, and Smyth]{edgeR}
M.~D. Robinson, D.~J. McCarthy, and G.~K. Smyth.
\newblock edger: a bioconductor package for differential expression analysis of
  digital gene expression data.
\newblock \emph{Bioinformatics}, 26:\penalty0 139--140, Jan 2010.

\bibitem[Salzman et~al.(2011)Salzman, Jiang, and Wong]{Salzman2011}
J.~Salzman, H.~Jiang, and W.~H. Wong.
\newblock Statistical modeling of rna-seq data.
\newblock \emph{Statistical Science}, 26\penalty0 (1):\penalty0 62--83, 2011.

\bibitem[Seyednasrollah et~al.(2015)Seyednasrollah, Laiho, and
  Elo]{seyednasrollah2015comparison}
F.~Seyednasrollah, A.~Laiho, and L.~L. Elo.
\newblock Comparison of software packages for detecting differential expression
  in rna-seq studies.
\newblock \emph{Briefings in bioinformatics}, 16\penalty0 (1):\penalty0 59--70,
  2015.

\bibitem[She and Owen(2011)]{She2011}
Y.~She and A.~B. Owen.
\newblock Outlier detection using nonconvex penalized regression.
\newblock \emph{Journal of the American Statistical Association}, 106\penalty0
  (494), 2011.

\bibitem[Smyth(2004)]{Smyth2004}
G.~Smyth.
\newblock Linear models and empirical bayes methods for assessing differential
  expression in microarray experiments.
\newblock \emph{Statistical applications in genetics and molecular biology},
  2004.

\bibitem[Soneson and Delorenzi(2013)]{soneson2013comparison}
C.~Soneson and M.~Delorenzi.
\newblock A comparison of methods for differential expression analysis of
  rna-seq data.
\newblock \emph{BMC bioinformatics}, 14\penalty0 (1):\penalty0 1, 2013.

\bibitem[Tibshirani(1996)]{Tibshirani1996}
R.~Tibshirani.
\newblock Regression shrinkage and selection via the lasso.
\newblock \emph{Journal of the Royal Statistical Society. Series B
  (Methodological)}, pages 267--288, 1996.

\bibitem[Trapnell et~al.(2010)Trapnell, Williams, Pertea, Mortazavi, Kwan, {van
  Baren}, Salzberg, Wold, and Pachter]{Trapnell2010}
C.~Trapnell, B.~A. Williams, G.~Pertea, A.~Mortazavi, G.~Kwan, M.~J. {van
  Baren}, S.~L. Salzberg, B.~J. Wold, and L.~Pachter.
\newblock Transcript assembly and quantification by rna-seq reveals unannotated
  transcripts and isoform switching during cell differentiation.
\newblock \emph{Nat Biotechnol}, 28\penalty0 (5):\penalty0 511--515, May 2010.

\bibitem[Trapnell et~al.(2012)Trapnell, Roberts, Goff, Pertea, Kim, Kelley,
  Pimentel, Salzberg, Rinn, and Pachter]{trapnell2012differential}
C.~Trapnell, A.~Roberts, L.~Goff, G.~Pertea, D.~Kim, D.~R. Kelley, H.~Pimentel,
  S.~L. Salzberg, J.~L. Rinn, and L.~Pachter.
\newblock Differential gene and transcript expression analysis of rna-seq
  experiments with tophat and cufflinks.
\newblock \emph{Nature protocols}, 7\penalty0 (3):\penalty0 562--578, 2012.

\bibitem[Wagner et~al.(2012)Wagner, Kin, and Lynch]{Wagner2012}
G.~P. Wagner, K.~Kin, and V.~J. Lynch.
\newblock Measurement of mrna abundance using rna-seq data: Rpkm measure is
  inconsistent among samples.
\newblock \emph{Theory Biosci}, 131\penalty0 (4):\penalty0 281--285, Dec 2012.

\bibitem[Wang et~al.(2009)Wang, Gerstein, and Snyder]{Wang2009}
Z.~Wang, M.~Gerstein, and M.~Snyder.
\newblock Rna-seq: a revolutionary tool for transcriptomics.
\newblock \emph{Nat Rev Genet}, 10\penalty0 (1):\penalty0 57--63, Jan 2009.

\bibitem[Zhang et~al.(2014)Zhang, Jhaveri, Marshall, Bauer, Edson, Narayanan,
  Robinson, Lundberg, Bartlett, Wray, et~al.]{zhang2014comparative}
Z.~H. Zhang, D.~J. Jhaveri, V.~M. Marshall, D.~C. Bauer, J.~Edson, R.~K.
  Narayanan, G.~J. Robinson, A.~E. Lundberg, P.~F. Bartlett, N.~R. Wray, et~al.
\newblock A comparative study of techniques for differential expression
  analysis on rna-seq data.
\newblock \emph{PloS one}, 9\penalty0 (8):\penalty0 e103207, 2014.

\bibitem[Zhou et~al.(2014)Zhou, Lindsay, and Robinson]{edgeR-robust}
X.~Zhou, H.~Lindsay, and M.~D. Robinson.
\newblock Robustly detecting differential expression in rna sequencing data
  using observation weights.
\newblock \emph{Nucleic acids research}, 42\penalty0 (11):\penalty0 e91--e91,
  2014.

\end{thebibliography}
\end{document}